\normalfont\fontsize{14pt}{14pt}\bfseries }{\thesection}{1em}{}
\normalfont\fontsize{12pt}{12pt}\bfseries }{\thesubsection}{1em}{}
\newtheorem{prop}{Proposition}
\newtheorem{df}{Definition}
\newtheorem{tm}{Theorem}
\newtheorem{lm}{Lemma}
\newtheorem{as}{Assumption}
\newtheorem{ex}{Example}
\newenvironment{proof}[1][Proof]{\noindent\textbf{#1.} }{\ Rule{0.5em}{0.5em}}
\DeclareMathSymbol{\N}{\mathbin}{AMSb}{"4E}
\DeclareMathSymbol{\Z}{\mathbin}{AMSb}{"5A}
\DeclareMathSymbol{\R}{\mathbin}{AMSb}{"52}
\DeclareMathSymbol{\Q}{\mathbin}{AMSb}{"51}
\DeclareMathSymbol{\I}{\mathbin}{AMSb}{"49}
\DeclareMathSymbol{\C}{\mathbin}{AMSb}{"43}
\DeclareMathSymbol{\F}{\mathbin}{AMSb}{"46}
\begin{document}
\pagenumbering{gobble}

\title{Reduced-Form Allocations with Complementarity:  A 2-Person Case\thanks{I thank Tilman B\"orgers, Eric van Damme, Monique Laurent, Debasis Mishra and Zaifu Yang for their helpful comments. I am grateful to Rakesh Vohra for his numerous comments which greatly improved this paper.} }
\author{Xu
Lang\thanks{%
Department of Economics, Southwestern University of Finance and Economics, Chengdu, China; langxu@swufe.edu.cn.} }
\date{  February 22, 2022}
\maketitle

\begin{abstract}
We investigate the implementation of reduced-form allocation probabilities in a two-person bargaining problem without side payments, where the agents have to select one alternative from a finite set of social alternatives. We provide a  necessary and sufficient condition for the implementability.   We find that the implementability condition in bargaining  has some new feature compared to Border's theorem. Our results have applications in compromise problems and package exchange problems where the agents barter  indivisible objects  and the agents   value the objects as complements.
\bigskip

\noindent \textbf{Keywords:} Reduced-Form Implementation; Bargaining; Compromise; Complementarity; Multidimensional Mechanism Design

\noindent \textbf{JEL classification codes:} C71, C78, D82.
\end{abstract}

\pagenumbering{arabic}

\newpage

\section{Introduction}

The bargaining problem of \citet{NA50} considers a social situation in which two agents can barter some indivisible goods with lotteries but there is no money to facilitate exchange. \citet{MY79, MY84} extends this bargaining problem to an incomplete information environment, in which each player has only statistical information about the other's preferences. Which lottery
over the alternatives will  the players implement as a fair compromise at the interim stage?  To analyze feasible allocations in this bargaining problem, one needs to resolve two kinds of basic constraints: First, an interim allocation probability has to satisfy  the usual incentive compatibility. Second, such an allocation probability has to be reduced-form implementable: there exists an ex post feasible allocation probability generating this interim allocation probability.  In order to find the set of all interim feasible allocations, it is important to characterize  implementable reduced-form allocations.

The implementability of reduced-form allocations has been well studied in  the classic one-dimensional auction design problem \`a la \cite{MY81}.   \cite{MR84} first study this question and obtain a partial solution.   \cite{MA84} poses a conjecture on the set of implementability inequalities, and \cite{BO91, BO07} obtains a complete characterization by a geometric approach. Alternative approaches have  been  developed recently. \cite{CKM13} propose a network flow approach and generalize Border's theorem to multi-unit auctions with polymatroidal constraint structures.\footnote{\cite{AFH19} develop a  polymatroidal decomposition approach and obtain a similar characterization.} \cite{ HR15} and \cite{KMS21} obtain a related characterization by the theory of majorization inequalities.  \cite{GK11, GK16, GK20} consider a social choice problem environment and further develop the geometric approach.  \cite{LY19} provide a conic characterization of reduced form auctions. \cite{Zh21} extended the network flow analysis to multiple  objects with multiple units assuming the paramodular constraints on quotas.  Due to its analytical tractability, the reduced-form characterization has been a powerful tool to study various types of mechanism design problems.\footnote{See for example \cite{AM01},  \cite{HS11},  \cite{MIR12},  \cite{PV14}, \cite{MZ17}, and \cite{GMSZ20}. }

A fundamental assumption in the auction problem of \citet{MY81} is that  each of the buyers is interested in his own chance of winning, i.e., there is no allocative externalities among the buyers. On the other hand, in a 2-person Bayesian bargaining or compromise problem with public alternatives \citep{MY79, MY84, BP09},  as well as mechanism design problems with
allocative externalities (e.g., \citealt{JMS96,JMS99}),  each social alternative may influence all players.\footnote{\cite{BP09} observe that the reduced-form implementation problem in a compromise problem differs from that in the classic auction problem.} In these cases, multiple public alternatives determine a multidimensional reduced form probability for each player, which distinguishes these problems from the standard auction problems.

In this paper, we study the implementation problem in Myerson's bargaining problems without side payments. We first translate the implementation problem into a directed multiflow problem in an appropriately defined network. To fully characterize the implementability condition,  we introduce a graph transformation technique  due to  \cite{EV78a} and \cite{ST80}, which reduces the directed multiflow problem into a classic single-commodity flow problem. Then by invoking a version of Hall's theorem, we characterize the necessary and sufficient condition for the implementability.

To illuminate the computational tractability of our results, we study polyhedral properties of the set of feasible reduced forms.\footnote{See \cite{AFH19}, \cite{CDW17}, and \cite{GNR15} for a discussion on the computational complexity of the reduced-form approach. } We characterize the implementability inequalities by the notion of lattice polyhedra due to \cite{GH82}. We compare our result to Border's condition, and show that the implementability condition in a bargaining problem has some new feature compared to Border's condition in a standard auction.

Our  implementation result can be useful to study axiomatic solutions of bargaining with incomplete information. Nash's bargaining solution is the best known solution in 2-person bargaining problems. \cite{HS72} provide a generalization of Nash's solution to bargaining games with incomplete information. \cite{MY79, MY84} and \cite{BP09} consider two-person bargaining and compromise problems with a finite set of alternatives and further incorporates incentive compatibility from a mechanism design perspective. Since the generalized Nash product is a non-linear objective,  the implementability condition could be binding in an optimal solution and required for a characterization. This condition can be further used to obtain a reduced-form characterization of the sets of incentive feasible and incentive efficient mechanisms (e.g., \citealt{HM83}).

While we focus on a bargaining model without side payments, the implementation result also
applies to other preference domains with the same reduced-form structures. In particular, it applies to multidimensional mechanism design problems with quasi-linear utilities, i.e., multi-item auctions with sub (super)-modular valuations and public good problems with multiple alternatives, when the number of privately informed players is restricted to two.

In multi-item auction problems, packaging decisions are important whenever a user's value for
a package is different from the sum of the values of the separate items, i.e., spectrum licenses can be either substitutes or complements (e.g., \citealt{AM02, MGM07, SY06}).\footnote{See also \cite{SAM74} for a discussion on complementarity.} Compared to auction design with additive valuations, designing package auctions and exchanges is more challenging \citep{MGM07}. The reduced-form implementation in such a problem also differs from the classic analysis. When every agent has an additive valuation function for the items,
Border's theorem can be applied separately to each item (e.g., \citealt{AM01, MIR12, CDW17}). However, when the agents' valuation functions are non-additive, Border's condition
cannot be applied separately to each item.  The multidimensionality of reduced-form allocation rules distinguishes the package auction problems from the one-dimensional auction problems.  Then the implementation result in this paper will be relevant for the analysis. 

It is worth noting that \cite{GK11,GK16, GK20}  also study an implementation problem in social choice environments.  They characterize the support function for the probability simplex constraint and obtain an implementability conditions for reduced form values and allocation probabilities. Their support function characterization  applies to general social choice problems  and covers our bargaining problems.  On the other hand, while their condition needs to be checked for every  possible system of real-valued weights, the condition in our paper identifies the facet-induing weights and needs to be checked for finitely many inequalities.

In Section 2, we present a package exchange example which generates an implementation problem in our setting. 
In Section 3, we introduce Myerson's bargaining model and the corresponding implementation problem. In Sections 4 and 5, we provide a characterization for the implementability.  In Section 6, we analyze the structures of feasible reduced forms. Section 7 concludes.

\section{A Motivating Example:  Package Exchange}

We provide a simple example to illustrate how to generate an implementation problem in a two person exchange economy with complementary objects and how it differs from \cite{BO91}. Suppose there are two agents, 1 and 2, and two heterogeneous indivisible
items, $A$ and $B$, initially owned by agent 1  and agent 2 respectively. Agents can barter the items with lotteries but there is no money. For each
package $S \subseteq \{A,B\}$,  agent $i$ has a private valuation $v_i(S)\in \R_+$. Assume $v_i(\emptyset)=0$. Define the set $D$ of social alternatives by all possible partitions of the items between the agents 1 and 2: $D=\{(AB,\emptyset), (A,B), (B,A), (\emptyset, AB)\}$. We consider two classes of ex post feasibility constraints with lotteries.

Suppose first that as in \cite{BO91}, two items are allocated separately and we use independent lotteries $(q^A,q^B)$ for $A$ and $B$, which require the following
feasibility constraint for each item,
\begin{equation}
q_1^S+q_2^S=1, q^S\geq 0, \,\,\text{for}\,\, S=A,B.
\end{equation}
Multiplying the above equations for $q^A$ and $q^B$, we obtain the probabilities of each social alternative
being chosen by
\begin{equation}
q_1^Aq_1^B+q_1^Aq_2^B+q_1^Bq_2^A+q_2^Aq_2^B=1, q\geq 0, \,\,\text{for}\,\, S=A,B.
\end{equation}
Suppose next that we conduct a correlated lottery $q$ over social alternatives, which allows the possibility of bundling the items,
\begin{equation}
q\geq 0,  \sum_{d\in D} q^d=1.
\end{equation}
Under alternative valuation assumptions, the difference between correlated and independent
lotteries is immediate: With additive valuations (i.e., $v_i(A)+v_i(B)=v_i(AB)$), each agent $i$ will be interested in $(q_i^A, q_i^B)$ only, and hence using independent lotteries $(q^A, q^B)$ is without loss.\footnote{\cite{AM01} characterize the optimal auction where the agents' valuations for the objects are additive.} With complementary valuations (i.e., $v_i(A)+v_i(B)<v_i(AB)$), agent $ i$ will be interested in $q^d$ in general, i.e., the probabilities that agent $ i$ obtains different packages. In this case, restricting to independent lotteries rules out many feasible allocations. 

Therefore, in the presence of complementarity, it is necessary to define the feasibility constraint (and hence the implementation problem) over the entire set of social alternatives. The example shows that even for private good allocations, the assumption on valuation functions is important for the formulation of an reduced-form implementation problem. An auction problem with complementarity has the same reduced-form structure as a bargaining problem with public alternatives.

\section{The Model}

We consider a simple two-person \textit{Bayesian bargaining problem} without side payments in \cite{MY84}. There are two players, 1 and 2, and a finite set   of social
alternatives $K=\{k_0,k_1,...,k_n\}$.
 For each player $i$, there is   a finite set $T_i$  of possible types for player $i$.   Let $u_i: K\times T_i\to \R$ denote a utility function of player $i$, measured in a von Neumann-Morgenstern utility scale. We assume that at least one of the players has a binary type space:
   \begin{as} $min \{|T_1|, |T_2|\}= 2$.  
 \end{as}
 
The  model fits many stylized  situations in bargaining with risk averse agents, compromise models and package exchange models: We provide three examples in the following.

 \begin{ex}  (Private risk attitude). One of the results most frequently quoted in the bargaining literature is that increasing risk aversion may hurt a player in the bargaining outcome (e.g., \citealt{KRS81}). Suppose players' risk attitudes are private information.  For each alternative $k\in K$, let $w_{ik}$ be the  monetary payoff that player $i$ receives from alternative $k$ and assume the payoff matrix $w=(w_{ik})$ is common knowledge. Each player can be either risk neutral ($t^a_i$) or risk averse ($t^b_i$), where
 \begin{equation}
u_i(k, t^a_i)= w_{ik},   \,\,{and}\,\, u_i(k,  t^b_i)=\sqrt{ w_{ik}}.\notag
 \end{equation}
\end{ex}

 \begin{ex}  (Private compromise payoff). Consider a compromise problem in \cite{BP09}. There are three alternatives, $k_1, k_2$ and $k_0$, and two players 1 and 2 with opposite preferences: $k_1\succ_1 k_0\succ_1 k_2$ and $k_2\succ_2 k_0\succ_2 k_1$.\footnote{The compromise problem of \cite{BP09} assumes that there is no disagreement outcome. When $k_0\in K$ is selected as the disagreement  outcome, the model can be interpreted as a bargaining problem in \cite{MY84}.}   We normalize $u_i(k_i)=1$ and $u_i(k_j)=0$. Assume that players are risk neutral and each player has private information about her payoff on the compromise alternative $k_0$, i.e., $u_i(k_0)=t_i$. Each player's type can be either strong or weak, with a strong type receiving a higher payoff from $k_0$.
 \end{ex}
 
  \begin{ex} (Private package valuation). Consider the package exchange example in Section 2, where agents can exchange the two items they own. Our model allows for domains that cover ordinal and cardinal preferences. A type $t_i$ may represent a valuation vector $t_i=(v_i^A, v_i^B, v_i^{AB})$. Alternatively,  a type  $t_i$ may represent an ordinal preference over all possible packages. For example, for each agent i, let $T_i=\{\succsim_i,\succsim_i'\}$ where
   \begin{equation}
AB\succsim_i A\succsim_i B  \,\,{and}\,\, AB\succsim_i' B\succsim_i' A.\notag
 \end{equation}
An agent's preference over lotteries can be then defined by first order stochastically dominance relation.
 \end{ex}

We let $T=T_1\times T_2$ denote the set of all possible  type profiles and let $\lambda $ be a probability measure on $T$, as the common prior of the players. We assume $\lambda(t)>0 $ for all $t\in T$.  For each player $i$, let  $\lambda_i$ denote the marginal probabilities induced by $\lambda$.  We assume $\lambda$ is statistically independent.

 An  (ex post) feasible allocation rule assigns to each type profile a lottery over social alternatives. Formally,  an ex post allocation rule  $q: K\times T\to \R$ is \emph{feasible} if
\begin{equation}\label{eq:1}
q\geq 0 \phantom{0}  \text{and} \phantom{0}  \sum_{k\in K} q^k(t)= 1, \phantom{0}  \text{for all} \phantom{0} t\in T.
\end{equation}
The component $q^k(t)$ denotes the probability that  alternative $k\in K$ is chosen given type profile $t \in T$.

An ex post feasible allocation rule $q$ induces an interim allocation rule  $Q=(Q_1,Q_2)$, where $Q_i:K\times T_i  \to \R$   denotes   player $i$'s interim expected allocation probabilities  given his type.   For each  $i=1,2$, $t_i\in T_i$, and $k\in K$,
\begin{equation}\label{eq:2}
Q_i^k (t_i):=\sum_{t_{-i}\in T_{-i}}q^k (t) \lambda_{-i}(t_{-i}).
\end{equation}
(We use the notations $T_{-i},t_{-i},\lambda_{-i}$ for the player other than $i$.) We then say $Q$ is the \emph{reduced form} of $q$ and $q$ \emph{implements} $Q$.

Conversely, one could begin with an arbitrary interim allocation rule and ask whether it can be implemented by an ex post feasible allocation rule or not.

\begin{df}  An interim allocation rule $Q$ is {\it implementable}, if   there exists an ex post allocation rule $q$ such that $(q,Q)$ satisfies both \eqref{eq:1} and \eqref{eq:2}.
\end{df}

Note that  $(\{1,2\}, K, T, \lambda)$ fully describes an implementation environment. We state an implementation problem $\mathcal{I}=(\{1,2\}, K, T, \lambda,Q)$ as follows:  Pick any interim allocation rule $Q$ and determine whether it is implementable or not.

By inspection, we obtain the following necessary conditions on $Q$ for the implementability. First, if  $Q$  is implementable, then
\begin{equation}
  \sum_{k\in K} Q^{k}_i(t_i)=1, \,\,\text{for all}\,\,t_i\in T_i, i=1,2.\label{eq:3}
    \end{equation}
Denote by $K_*$ the set of social alternatives other than $k_0$. From condition \eqref{eq:3}, it follows that  each $Q_i^{k_0}$ is a slack variable determined by  $(Q_i^k)_{k\in K_*}$. Below   the slack variable $Q^{k_0}$ will be explicitly taken into our analysis but we state our characterization result with variables $(Q^k)_{k\in K_*}$ only.

\section{ Characterization}

\begin{lm} If $Q$ is implementable, then 
\begin{align}
&\sum_{t_1\in T_1} Q_1^k(t_1)\lambda_1(t_1)-\sum_{t_2\in T_2} Q_2^k(t_2)\lambda_2(t_2)=0, \,\,\text{for all}\,\,k\in K_*, \label{eq:4}
 \end{align}
and
 \begin{align}
&Q^k_i\geq 0, \,\,\text{for all}\,\, k\in K_*, i=1,2.\label{eq:5}
 \end{align}

\end{lm}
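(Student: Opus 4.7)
The plan is a direct verification from the definitions; no deep machinery is needed here. Suppose $Q$ is implementable, so by definition there exists an ex post allocation $q$ satisfying \eqref{eq:1} and \eqref{eq:2}. I will check \eqref{eq:4} and \eqref{eq:5} in turn.

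For \eqref{eq:5}, I simply invoke \eqref{eq:2}: since $q^k(t)\geq 0$ for all $t$ by \eqref{eq:1}, and since $\lambda_{-i}$ is a probability measure (hence nonnegative), the interim allocation
\[
Q_i^k(t_i)=\sum_{t_{-i}\in T_{-i}}q^k(t_i,t_{-i})\,\lambda_{-i}(t_{-i})
\]
is a nonnegative combination of nonnegatives, so $Q_i^k(t_i)\geq 0$ for every $t_i\in T_i$.

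For \eqref{eq:4}, the key observation is that the common prior $\lambda$ is a product measure, so both sides coincide with the ex ante probability that alternative $k$ is chosen. Explicitly, using \eqref{eq:2} and $\lambda(t)=\lambda_1(t_1)\lambda_2(t_2)$, I would compute
\[
\sum_{t_1\in T_1}Q_1^k(t_1)\lambda_1(t_1)=\sum_{t_1\in T_1}\sum_{t_2\in T_2}q^k(t_1,t_2)\lambda_2(t_2)\lambda_1(t_1)=\sum_{t\in T}q^k(t)\lambda(t),
\]
and the analogous calculation with the roles of $1$ and $2$ swapped yields the same quantity for $\sum_{t_2}Q_2^k(t_2)\lambda_2(t_2)$. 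Subtracting gives \eqref{eq:4}.

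There is no serious obstacle: everything is a one-line consequence of the definitions together with the independence of $\lambda$ stated earlier. I would note that \eqref{eq:4} is simply the statement that each player's marginal probability of alternative $k$ (integrated over his own type) must agree with the true ex ante probability of $k$, a coherence condition that is automatic once $Q_1$ and $Q_2$ arise from the same joint rule $q$ under an independent prior. This observation also explains why statistical independence of $\lambda$ is used at this step, and why \eqref{eq:4} would need to be adjusted for correlated priors.
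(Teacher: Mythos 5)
Your proof is correct and is exactly the direct verification the paper has in mind (the paper treats Lemma 1 as following "by inspection" and omits the argument): nonnegativity of $Q$ follows immediately from \eqref{eq:1}--\eqref{eq:2}, and the balance condition \eqref{eq:4} follows because, under the independent prior, both sums equal the ex ante probability $\sum_{t\in T}q^k(t)\lambda(t)$ of alternative $k$. Your closing remark correctly identifies where statistical independence of $\lambda$ enters.
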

Condition \eqref{eq:4} requires that at the ex ante stage, the two players must have consistent beliefs on the probability allocated to each alternative.  Note that \eqref{eq:4} and \eqref{eq:5} define a pointed cone.\footnote{The system of equations \eqref{eq:4} defines a  linear subspace $L$ of $\R^{(|T_1|+|T_2|)\times|K_*|}$: For $Q',Q{''}\in L$, it implies $\alpha Q'+\beta Q{''}\in L$ for all $\alpha, \beta\in \R$.}  We call conditions \eqref{eq:4}-\eqref{eq:5} the conic condition.  

\begin{lm} If $Q$ is implementable, then 
\begin{equation}\label{eq:cut2}
\sum_{k\in G}[\sum_{t_1\in E_1} Q^k_1(t_1)\lambda_1(t_1)-\sum_{t_2\in   E_2} Q^k_2(t_2)\lambda_2(t_2)]   \leq  \lambda(E_1\times E_2^c),
 \end{equation}
for all $G\subseteq K_*$,  $ E_1\subseteq T_1 $, and $E_2\subseteq T_2$.

\end{lm}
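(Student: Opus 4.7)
The plan is to derive the inequality directly from the definition of the reduced form together with the ex post feasibility constraints in \eqref{eq:1}–\eqref{eq:2}, much in the spirit of the necessity direction of Border's theorem. Since the left-hand side of \eqref{eq:cut2} involves an integration against $\lambda_i$ of the interim rule, I expect each marginal integral to collapse into an expectation of $q^k$ over a slab of $T_1\times T_2$, and the key trick will be to recognize that the two slabs overlap on $E_1\times E_2$.

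Concretely, assuming $q$ implements $Q$, I first substitute \eqref{eq:2} into the left-hand side to get
\[
\sum_{t_1\in E_1} Q_1^k(t_1)\lambda_1(t_1)=\sum_{t\in E_1\times T_2} q^k(t)\lambda(t),\qquad \sum_{t_2\in E_2} Q_2^k(t_2)\lambda_2(t_2)=\sum_{t\in T_1\times E_2} q^k(t)\lambda(t),
\]
using statistical independence of $\lambda$. Then I decompose $E_1\times T_2=(E_1\times E_2)\sqcup(E_1\times E_2^c)$ and $T_1\times E_2=(E_1\times E_2)\sqcup(E_1^c\times E_2)$; the common $E_1\times E_2$ piece cancels, leaving
\[
\sum_{k\in G}\Bigl[\sum_{t\in E_1\times E_2^c}q^k(t)\lambda(t)-\sum_{t\in E_1^c\times E_2}q^k(t)\lambda(t)\Bigr].
\]

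Next I drop the non-negative subtracted term (using $q\geq 0$ from \eqref{eq:1}) to bound this above by $\sum_{t\in E_1\times E_2^c}\lambda(t)\sum_{k\in G}q^k(t)$. Since $G\subseteq K_*\subseteq K$ and $\sum_{k\in K}q^k(t)=1$ by \eqref{eq:1}, the inner sum is at most $1$, giving
\[
\sum_{t\in E_1\times E_2^c}\lambda(t)\sum_{k\in G}q^k(t)\;\leq\;\sum_{t\in E_1\times E_2^c}\lambda(t)=\lambda(E_1\times E_2^c),
\]
which is exactly \eqref{eq:cut2}. There is no real obstacle: the proof is a two-line manipulation once the set decompositions are in place. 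The only point requiring care is the double counting on $E_1\times E_2$, which is what makes the bound $\lambda(E_1\times E_2^c)$ rather than $\lambda(E_1\times T_2)$. The resulting inequality is the natural bargaining analogue of a Border-style ``supply meets demand'' cut, and it will serve as the base necessary condition against which the sufficiency proof in the next section is pitched.
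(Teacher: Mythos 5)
Your proof is correct. The substitution of \eqref{eq:2} into the left-hand side, the use of independence to write each marginal sum as a sum of $q^k(t)\lambda(t)$ over a slab, the cancellation of the common rectangle $E_1\times E_2$, and the final bounds from $q\geq 0$ and $\sum_{k\in K}q^k(t)=1$ are all valid, and the argument does isolate the one place where care is needed (the double counting on $E_1\times E_2$, which is what produces the bound $\lambda(E_1\times E_2^c)$). It is worth noting, however, that the paper never writes down this direct verification: it states the lemma as a necessary condition and obtains it only as a byproduct of the full equivalence in Appendix A, where the implementation problem is recast as a multiflow problem, transformed via the suspension-graph procedure of Evans and Soun--Truemper into a single-commodity flow problem, and then characterized by a Hall-type feasible-flow theorem. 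That machinery delivers necessity and sufficiency simultaneously but is far heavier; your computation buys a transparent, self-contained proof of the necessity direction alone, at the cost of giving no leverage on sufficiency, which is where the real work of Theorem 1 lies. One small presentational point: your argument in fact establishes the slightly stronger identity that the left-hand side of \eqref{eq:cut2} equals $\sum_{k\in G}[\sum_{t\in E_1\times E_2^c}q^k(t)\lambda(t)-\sum_{t\in E_1^c\times E_2}q^k(t)\lambda(t)]$, which makes visible the $\{-1,0,+1\}$ coefficient structure the paper emphasizes when contrasting \eqref{eq:cut2} with Border's condition.
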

Condition \eqref{eq:cut2} can be interpreted as saying that  for any set of types $E_1\times E_2$, the difference in the players' interim beliefs on the allocation probability of any subset  $G\subseteq K_*$ of alternatives  cannot be too distinct. In contrast to Border's condition where coefficients $\{0,1\}$ fully describe the implementability inequalities,  condition \eqref{eq:cut2} requires coefficients $\{-1,0,+1\}$. To interpret this result, notice that in Border's condition, selling to one buyer with a higher  expected probability  tightens the probability budget for another buyer.   The buyers are competing for the  expected probabilities of winning. In a bargaining model, selecting alternative $k$ with a higher  expected probability for one player, however,   relaxes the probability budget of alternative $k$  for the other player. The players have common interests at each alternative.

The following Theorem 1, which is the main result of the paper, shows that the necessary conditions in Lemmas 1 an 2 provide a complete description of the implementability condition.

\begin{tm}  $Q$ is implementable if and only if conditions \eqref{eq:4}, \eqref{eq:5}, and \eqref{eq:cut2} hold.
\end{tm}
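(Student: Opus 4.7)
The plan is to prove sufficiency, since necessity is already established by Lemmas 1 and 2. For sufficiency I would reformulate the existence of an ex post $q$ as the feasibility of a directed multicommodity flow, reduce that flow problem to a single-commodity max-flow instance using Assumption 1, and then apply max-flow/min-cut.

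First, rescale by setting $\mu^k(t_1,t_2):=q^k(t_1,t_2)\lambda_1(t_1)\lambda_2(t_2)$ for each $k\in K_*$. Implementing $Q$ is equivalent to finding nonnegative matrices $\{\mu^k\}_{k\in K_*}$ with row sums $Q_1^k(t_1)\lambda_1(t_1)$, column sums $Q_2^k(t_2)\lambda_2(t_2)$, and entrywise $\sum_{k\in K_*}\mu^k(t_1,t_2)\le \lambda(t_1,t_2)$ (the residual playing the role of $\mu^{k_0}\ge 0$). I would view this as a directed multiflow on a bipartite network $N$ with vertex set $T_1\cup T_2$, in which commodity $k\in K_*$ has supply $Q_1^k(t_1)\lambda_1(t_1)$ at each $t_1\in T_1$ and demand $Q_2^k(t_2)\lambda_2(t_2)$ at each $t_2\in T_2$, while the arc $(t_1,t_2)$ carries a shared capacity $\lambda(t_1,t_2)$. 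The balance condition \eqref{eq:4} is exactly that supply equals demand per commodity, and \eqref{eq:5} is nonnegativity of the commodity flows.

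Second, since multicommodity flows generally lack a max-flow/min-cut description, I would invoke Assumption 1 to take $|T_1|=2$ without loss of generality and then apply the graph transformation of \cite{EV78a} and \cite{ST80}. That construction rewires $N$ into a single-commodity network $N^\star$ by taking one copy of the bipartite routing pattern per commodity $k\in K_*$ and splicing these copies through auxiliary source/sink nodes whose arc capacities encode both the per-commodity supplies/demands and the shared capacities $\lambda(t_1,t_2)$. The force of Assumption 1 is that, with just two vertices on one side, this splicing is lossless: a feasible multiflow in $N$ exists if and only if an $s$-$d$ flow of a prescribed value exists in $N^\star$.

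Finally, I would apply max-flow/min-cut (a weighted Hall-type statement) to $N^\star$. Every $s$-$d$ cut in $N^\star$ can be indexed by a triple $(G,E_1,E_2)$ with $G\subseteq K_*$, $E_1\subseteq T_1$, $E_2\subseteq T_2$, and a direct calculation identifies the cut capacity minus the required flow value with the slack in \eqref{eq:cut2}; the degenerate cuts (where $G$ or some $E_i$ is trivial) reduce to the conic conditions \eqref{eq:4} and \eqref{eq:5}. Hence \eqref{eq:4}--\eqref{eq:cut2} imply every cut of $N^\star$ is sufficient, so a feasible single-commodity flow exists, and unwinding the transformation produces the desired $q$. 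The main technical obstacle is the second step: tracking the Evans--Schrijver transformation carefully enough that the cuts of $N^\star$ correspond exactly to the triples $(G,E_1,E_2)$ with the $\{-1,0,+1\}$ coefficients in \eqref{eq:cut2}, neither introducing spurious inequalities nor missing feasibility-binding ones. Assumption 1 is essential for this reduction, since without it the underlying multiflow problem admits no cut-based characterization.
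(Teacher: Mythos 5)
Your proposal follows essentially the same route as the paper: reformulate implementation as a multicommodity flow on the bipartite type graph, use Assumption 1 so that the Evans/Soun--Truemper suspension transformation collapses it to a single-commodity flow problem, and then read off \eqref{eq:cut2} from a Hall/max-flow--min-cut feasibility condition on the transformed network. One small caveat: condition \eqref{eq:4} is an equality and cannot arise as a degenerate cut inequality --- in the paper it enters as the balance prerequisite for the multiflow formulation (and is used again to rewrite the cuts with $k_0\in U_1$), exactly as in your own first step, so your later remark attributing it to degenerate cuts is inaccurate but not load-bearing.
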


The implementability condition in Theorem 1 generalizes a well-known condition of \cite{ST65} and \cite{GKRS91}, which gives a necessary and sufficient condition for the existence of measures with given marginals. To see this, suppose $K=\{k_0,k_1\}$ and define  for $i=1,2$,
\begin{equation}
\nu_i(E_i):=\sum_{t_i\in E_i}Q_i^{k_1}(t_i)\lambda_i(t_i)\end{equation}
 in the implementability condition. Then we obtain Strassen's condition
\begin{equation}
\nu_1(E_1)\leq \nu_2(E_2)+ \lambda(E_1\times E_2^c),
\end{equation}
for all $ E_1\subseteq T_1 $ and $E_2\subseteq T_2$.

\cite{GGKMS13} observe that for standard auctions with two buyers (i.e., $|K|=2$), Border's condition is equivalent to Strassen's condition. We note that our characterization differs from the network flow characterization of \cite{CKM13} in several aspects. First, \cite{CKM13} translate the implementation problem in one-dimensional auctions  into a directed single-commodity polymatroidal flow problem. To derive the implementability condition, they invoke a feasible flow characterization.  Their  characterization for single-commodity flow problems hence is not applicable to a multiflow problem. Second, while   \cite{CKM13} show that the implementation system in their problem is the intersection of two polymatroids (see \citealt{ED70}) and forms a totally dual integral (TDI) system,  the multiflow system we construct  is not TDI in general.  Finally, while \cite{CKM13} construct a network flow problem with capacitated arcs, the network formulation in our model has uncapacitated arcs and capacitated nodes.

\section{Proof of Theorem 1}
In this section, we provide a sketch of proof of Theorem 1. We first formulate the implementation problem as a multiflow problem. We then apply a graph transformation  which reduces the multiflow problem to a classic flow problem. Finally, using a feasible flow theorem, we show that the  conditions \eqref{eq:4}-\eqref{eq:cut2} is necessary and sufficient for the implementability.

We first introduce a multiflow problem. Let $G=(V,A)$ be a directed network and let $J$ be a set of commodities. For each commodity $k\in J$, we partition the nodes $V$ into $S^k, \bar{S}^k$, and $N^k$, which denote the sources, sinks, and transit nodes.  Notice that the model allows multiple sources and sinks for each commodity. For each node $v$, let $d^k(v)$ be the net demand of commodity $k$.  Hence  $d^k(v)<0$ if $v\in S^k$,  $d^k(v)>0$ if $v\in \bar{S}^k$, and $d^k(v)=0$ if $v\in N^k$. Assume $\sum_{v\in V} d^k(v)=0$. For each arc $a\in A$, let $c(a)$ be the capacity. Define a multiflow system by
\begin{align}
\label{eq:T2a}& M_G f^k=d^k, \,\,   \text{for all}\,\, k,\\
\label{eq:T2b}& \sum_{k} f^k+s=c, \,\,  \\
\label{eq:T2c} &f\geq 0,  s\geq 0,
\end{align}
where  $M_G$ is the node-arc incidence matrix of $G$, $f=(f^k)$ is a multiflow variable, and $s$ is a slack variable associated with capacity constraints.  A multiflow problem $P=(G, (S^k, \bar{S}^k, d^k)_{k\in J}, c) $ is to determine whether a feasible multiflow satisfying \eqref{eq:T2a}-\eqref{eq:T2c} exists or not.

We now formulate $ \mathcal{I} $ as a multiflow problem $P_{\mathcal{I}}$ of the form \eqref{eq:T2a}-\eqref{eq:T2c}. Define graph $G$ by $(T_1\cup T_2, T)$, where $T_1$ and $T_2$ consist of all nodes and $T$ consists of all arcs, from each $t_1\in T_1$ to each $t_2\in T_2$.  
Define the set $J$ of commodities by $K_*$. For each commodity $k$,  we define  the sources $S^k$ by  $T_1$  and the sinks  $\bar{S}^k$ by $T_2$.  For each node $t_1$ (and $t_2$), and commodity $k$, we define the net demand $- Q_1^k(t_1)\lambda_1(t_1) $ (and $Q_2^k(t_2)\lambda_2(t_2) $).     For each arc $(t_1,t_2)\in T$, define the capacity $\lambda(t)$.  For each arc $(t_1,t_2)\in T$, define the multiflow variables $f^k(t_1,t_2)=q^k(t)\lambda(t)$, $k\in K_*$, and the slack variable $s(t_1,t_2)=q^{k_0}(t)\lambda(t)$. We have the following equivalence between the implementation problem and the multiflow problem.
\begin{lm}
 $\mathcal{I}$ has a feasible solution if and only if $P_{\mathcal{I}}$ has  a feasible multiflow.
\end{lm}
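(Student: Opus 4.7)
The plan is to establish the equivalence by exhibiting an explicit bijection between feasible ex post allocation rules of $\mathcal{I}$ and feasible multiflows of $P_{\mathcal{I}}$. The dictionary is already implicit in the statement of the lemma: set $f^k(t_1,t_2) = q^k(t)\lambda(t)$ for $k\in K_*$ and $s(t_1,t_2) = q^{k_0}(t)\lambda(t)$. Since $\lambda(t)>0$ for every $t\in T$ by assumption, this dictionary is invertible: given $(f,s)$ one recovers $q^k(t) = f^k(t_1,t_2)/\lambda(t)$ and $q^{k_0}(t) = s(t_1,t_2)/\lambda(t)$. The work is then to check that each of the three groups of constraints on one side translates to exactly one group on the other side.

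For the forward direction, suppose $q$ is feasible and $Q$ is its reduced form. Non-negativity $f,s\geq 0$ is immediate from $q\geq 0$ and $\lambda>0$. The capacity constraint \eqref{eq:T2b} follows because $\sum_{k\in K_*} f^k(t_1,t_2) + s(t_1,t_2) = \lambda(t)\sum_{k\in K} q^k(t) = \lambda(t)$. For the flow-conservation constraint \eqref{eq:T2a} at source $t_1$ and commodity $k$, the net outflow is
\begin{equation}
\sum_{t_2\in T_2} f^k(t_1,t_2) = \sum_{t_2\in T_2} q^k(t)\lambda_1(t_1)\lambda_2(t_2) = \lambda_1(t_1)\, Q_1^k(t_1),
\end{equation}
which matches the prescribed supply $-d^k(t_1) = Q_1^k(t_1)\lambda_1(t_1)$; the computation at each sink $t_2$ is symmetric. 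Here the crucial input is the statistical independence of $\lambda$, which lets me pull $\lambda_1(t_1)$ out of the inner sum so the reduced-form expression \eqref{eq:2} appears.

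For the reverse direction, suppose $(f,s)$ is a feasible multiflow and define $q$ by dividing through by $\lambda(t)$. Then $q\geq 0$ follows from $f,s\geq 0$; $\sum_{k\in K} q^k(t)=1$ follows from \eqref{eq:T2b}; and running the preceding calculation backwards shows that the reduced form of $q$ coincides with the given $Q$. Thus $q$ implements $Q$ and $\mathcal{I}$ is feasible.

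There is no real obstacle; everything reduces to routine arithmetic once one keeps track of the scaling by $\lambda(t)$ and uses independence to factor $\lambda_1(t_1)\lambda_2(t_2)$. The only point that requires even a moment's care is the translation of flow conservation at sinks versus sources, since the sign convention in the net-demand definition ($d^k(v)<0$ at sources, $d^k(v)>0$ at sinks) has to be matched with the node-arc incidence matrix $M_G$; once the signs are aligned with the direction of each arc $(t_1,t_2)$, the identification with \eqref{eq:2} is forced.
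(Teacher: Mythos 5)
Your proposal is correct and matches the paper's own (largely implicit) argument: the paper establishes the equivalence via exactly the same invertible change of variables $x^k(t)=q^k(t)\lambda(t)$, $s(t)=q^{k_0}(t)\lambda(t)$, relying on $\lambda>0$ and statistical independence so that $\sum_{t_{-i}}q^k(t)\lambda(t)=\lambda_i(t_i)Q_i^k(t_i)$. Your verification of the three constraint groups in both directions is the same routine bookkeeping the paper leaves to the reader.
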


\textbf{Graph transformation procedure.}    \cite{EV78a} and  \cite{ST80} introduced a  class of directed graphs called suspension graphs, and showed that a multiflow problem on a suspension graph can be transformed into a classic flow problem.  For a directed graph $G$, we refer to a {\it cycle} as a simple cycle in which (a) the  arcs have arbitrary directions and (b) the only repeated vertices are the first and last vertices. We say $G$ is {\it connected} if there is a path between every pair of vertices.  $G$ is {\it 2-connected} if it is connected, and if every pair of arcs is contained in at least one cycle. 

\begin{df}  A directed graph $ G$ is a {\it suspension}, if it is 2-connected and there exists a node $v_*$ such that after deleting  $v_*$  and the arcs incident to it, the graph does not contain any cycle. \end{df}

The following lemmas show that  if $G$ is a suspension  with respect to some node $v_*$, then problem $P$ has unimodular constraint matrix and it can be transformed into a classic network flow problem $P(v_*)$.\footnote{A matrix is unimodular if every basis has determinant $1$ or $-1$.  } Indeed,   each variable in $P(v_*)$ appears in exactly two constraints with opposite signs, and hence the constraint matrix of $P(v_*)$ is the node-arc incidence matrix of a directed graph, which we define as the transformed graph $G(v_*)$. Also, in the network $G(v_*)$, the supply and demand at each node  is defined by the original system, i.e., each row in $P(v_*)$ is a row in $P$. Notice that $P(v_*)$  is a linear program in standard form, and hence a network flow problem with uncapacitated arcs.

\begin{lm}\citep{ST80} Let $P=(G, (S^k, \bar{S}^k, d^k)_{k\in J}, c)$ be a multiflow problem. If $G=(V,A)$ is a suspension, then the constraint matrix of $P$ is unimodular.
\end{lm}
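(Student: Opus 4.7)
The plan is to exhibit an explicit sequence of integer row operations of determinant $\pm 1$ that transforms the constraint matrix of $P$ into the node-arc incidence matrix of an auxiliary directed graph. Since node-arc incidence matrices of directed graphs are totally unimodular (by the classical Heller-Tompkins theorem, hence unimodular) and since unimodular row operations preserve the absolute value of every minor, unimodularity of the transformed matrix will immediately yield unimodularity of the original. First I would write out the constraint matrix $A$ of $P$ block-wise: the rows are indexed by pairs $(v,k)\in V\times J$ (conservation) together with arcs $a\in A$ (capacity); each flow column $f^k(a)$ with $a=(u,w)$ carries three nonzero entries, namely $+1$ at $(u,k)$, $-1$ at $(w,k)$, and $+1$ at the capacity row $a$; each slack column $s(a)$ carries a single $+1$ at row $a$.

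Next I would exploit the suspension hypothesis. Two-connectivity together with acyclicity of $G\setminus v_*$ forces $G\setminus v_*$ to be a spanning tree $\mathcal T$ of $V\setminus\{v_*\}$, and forces every vertex in $V\setminus\{v_*\}$ to be incident to at least one arc into or out of $v_*$ (a \emph{spoke}). Within each commodity block the conservation rows sum to zero, so the $(v_*,k)$ rows are redundant and may be deleted. After this deletion, the flow columns associated with spokes have only two surviving nonzeros, and a sign-flip of the capacity rows for spokes whose head is $v_*$ makes these two entries carry opposite signs. For a tree arc $a=(u,w)$ with $u,w\neq v_*$ the flow column still has three nonzeros, so the task is to remove the residual capacity-row entry. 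I would do this by adding to the capacity row $a$ the alternating sum of conservation rows along the unique path in $\mathcal T$ from one endpoint of $a$ to a vertex adjacent to $v_*$; the telescoping cancels interior entries while the endpoint absorbs into a spoke, and uniqueness of the tree path guarantees that this operation is well-defined.

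After these operations the transformed matrix $A'$ has exactly two nonzeros of opposite sign in every column (with the slack columns becoming identity columns attached to the reshaped capacity rows, consistent with the two-nonzero-per-column pattern), so $A'$ is the node-arc incidence matrix of a directed graph $G(v_*)$ and is totally unimodular. Because each elementary row operation used above is integer with determinant $\pm 1$, every square submatrix of $A$ has the same absolute determinant as the corresponding submatrix of $A'$, so unimodularity of $A$ follows. The main obstacle is the combinatorial accounting in the row-reduction step: the operations on different columns must be carried out consistently across all commodities simultaneously, and one must verify that the cumulative effect on each column is exactly to cancel one specific entry while leaving the other two untouched. The tree structure of $G\setminus v_*$ supplies the unique paths that make the cancellation well-defined, and the 2-connectivity ensures that every such path terminates at a spoke into $v_*$ with the correct orientation — which is precisely why the suspension hypothesis is what makes the reduction possible.
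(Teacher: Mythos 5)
Your overall architecture --- transform the constraint matrix by integer row operations into a node--arc incidence matrix, invoke total unimodularity of incidence matrices, and transfer back --- is the right one and is in the spirit of the Evans/Soun--Truemper construction (note that the paper offers no proof of this lemma; it is quoted from Soun and Truemper (1980), so your attempt has to stand on its own). However, the execution has a genuine gap in the central row-reduction step. First, the structural claim that \emph{every} vertex of $V\setminus\{v_*\}$ carries a spoke is false: only the leaves of the tree $\mathcal{T}=G\setminus v_*$ are forced to be adjacent to $v_*$. The paper's own application already refutes it: for $G=(T_1\cup T_2,T)$ with $v_*=t_1^a$, the tree is the star centred at $t_1^b$, and $t_1^b$ has no arc to $v_*$. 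Second, and more importantly, the arc-indexed ``telescoping along a tree path'' recipe does not produce the two-nonzero, opposite-sign pattern. Adding the conservation row at the tail $u$ of a tree arc $a$ to the capacity row of $a$ cancels the entry in column $f^k(a)$ but injects $\pm1$ entries into the columns $f^k(a')$ of \emph{every other} arc $a'$ incident to $u$ --- including spokes, whose flow columns thereby acquire a third nonzero; and since $u$ may have tree-degree greater than two, the perturbation spreads over a subtree rather than a single path. (Try the triangle $v_*\to u\to w\to v_*$ with one commodity: after your operation the column of $f(v_*,u)$ has three $+1$ entries.) The transformation that actually works is node-indexed, not arc-indexed: for each $v\ne v_*$ one forms $\sum_{a\in\delta^{in}(v)}\mathrm{cap}(a)-\sum_{a\in\delta^{out}(v)}\mathrm{cap}(a)-\sum_{k}\mathrm{cons}(v,k)$, a row supported on the slack columns alone, and these rows together with the retained conservation rows and the capacity rows of arcs at $v_*$ form the incidence system recorded in the paper's Lemma 5. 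There are $|V|-1$ such node rows replacing $|V|-2$ tree-arc capacity rows, so the correspondence is not even a square change of basis.

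The transfer step is also overstated. Unimodular row operations do not preserve ``the absolute value of every minor''; they preserve only the maximal minors (determinants of full-column selections of a full-row-rank matrix). If every minor were preserved you would have proved \emph{total} unimodularity of the multiflow matrix, which is strictly stronger than the unimodularity asserted in the lemma and is not what holds for multicommodity systems --- this distinction is precisely why the literature states unimodularity here. Because the constraint matrix is rank deficient (the conservation rows of each commodity sum to zero) and because the row correspondence between $P$ and the transformed system is neither injective nor surjective, deducing that every basis of the \emph{original} matrix has determinant $\pm1$ from total unimodularity of the transformed incidence matrix requires an additional argument (for instance, showing that every basis can be brought by determinant-preserving exchanges to one omitting the $(v_*,k)$ rows, and then exhibiting a genuine square unimodular matrix linking the two row systems). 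As written, the proposal does not close this step.
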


\begin{lm}\citep{EV78a,ST80} Let $P=(G, (S^k, \bar{S}^k, d^k)_{k\in J}, c)$ be a multiflow problem where $G=(V,A)$ is a suspension with some node $v_*$. Define a network flow problem $P(v_*)$ by
\begin{align}
& f^k(\delta^{in}(v))-f^k(\delta^{out}(v))=d^k(v),   \,\,\text{for all}\,\,k, v\in V\setminus \{v_*\}, \notag \\
&s(\delta^{in}(v))-s(\delta^{out}(v))=d^s(v),   \,\,\text{for all}\,\,v\in V\setminus \{v_*\}, \notag\\
& \sum_{k} f^k(a)+ s(a)=c(a), \,\,\,\,\,\text{for all}\,\, a\in A(v_*),\notag \\
& f\geq 0,  s\geq 0,\notag
\end{align}
where $A(v_*)=\{(x,y)\in A: x\,\,{or}\, y=v_*\}$ and $d^s(v)=c(\delta^{in}(v))-c(\delta^{out}(v))-\sum_{k}d^k(v)$. Then $P$ has a feasible solution if and only if  $P(v_*)$ has a feasible solution.
\end{lm}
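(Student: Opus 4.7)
The plan is to prove the equivalence between feasibility of $P$ and feasibility of $P(v_*)$ by introducing a single auxiliary quantity, the arc-wise \emph{residual}, and showing it must vanish on the acyclic subgraph $G-v_*$.

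For any candidate pair $(f,s)\geq 0$, define $r(a):=c(a)-\sum_{k}f^k(a)-s(a)$ on every arc $a\in A$. Feasibility of $P$ is equivalent to $r\equiv 0$ on $A$ together with $f^k$-conservation at every node and $f,s\geq 0$. Feasibility of $P(v_*)$ is equivalent to: $r(a)=0$ only for $a\in A(v_*)$, $f^k$-conservation only at $v\neq v_*$, the slack-conservation $s(\delta^{in}(v))-s(\delta^{out}(v))=d^s(v)$ at $v\neq v_*$, and $f,s\geq 0$. The task thus splits into (a) showing $P$-feasibility implies the slack-conservation at non-$v_*$ nodes, and (b) showing $P(v_*)$-feasibility implies $r\equiv 0$ on all of $A$ together with $f^k$-conservation at $v_*$.

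For (a), assume $(f,s)$ solves $P$. Since $\sum_{k}f^k(a)=c(a)-s(a)$ on every arc, summing $f^k$-conservation at $v$ over $k$ and telescoping yields
\[
s(\delta^{in}(v))-s(\delta^{out}(v))\;=\;[c(\delta^{in}(v))-c(\delta^{out}(v))]-\sum_{k}d^k(v)\;=\;d^s(v),
\]
which is exactly the slack-conservation at $v$; all remaining constraints of $P(v_*)$ are inherited.

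For (b), assume $(f,s)$ solves $P(v_*)$. The analogous telescoping applied to $r(\delta^{in}(v))-r(\delta^{out}(v))$ at each $v\neq v_*$ yields $0$, using the definition of $d^s(v)$. Combined with $r\equiv 0$ on $A(v_*)$, $r$ restricts to a divergence-free (signed) flow on the subgraph $H$ obtained by removing $v_*$ and its incident arcs. By the suspension hypothesis $H$ is acyclic in the paper's sense (no undirected simple cycles), hence an undirected forest; leaf-peeling then forces $r\equiv 0$ on $H$ (a leaf has a unique incident arc, divergence-freeness forces the signed flow on it to be $0$, remove and iterate). So $r\equiv 0$ on all of $A$, recovering the capacity constraints of $P$. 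The $f^k$-conservation at $v_*$ is then automatic: summing this conservation over $v\neq v_*$ telescopes to the equation at $v_*$ by $\sum_{v}d^k(v)=0$. The only structural input is the suspension hypothesis, used exclusively to guarantee that $H$ is a forest so that leaf-peeling applies; the main obstacle---and the only spot a reader might stumble---is precisely the translation from ``no cycles in the paper's suspension sense'' to ``underlying forest supports leaf-peeling for signed balanced flows.''
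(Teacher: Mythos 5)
The paper does not prove this lemma; it is cited from Evans (1978) and Soun--Truemper (1980), so there is no in-paper argument to compare against. Your proof is correct and self-contained, and it captures exactly the mechanism behind the cited transformation: the easy direction follows because the slack-conservation equations at $v\neq v_*$ are linear consequences of the constraints of $P$, and for the converse the residual $r=c-\sum_k f^k-s$ is divergence-free at every $v\neq v_*$ and vanishes on $A(v_*)$, hence restricts to a balanced signed flow on $G-v_*$, which by the suspension hypothesis is an undirected forest, so leaf-peeling forces $r\equiv 0$; conservation at $v_*$ then follows from $\sum_v d^k(v)=0$ by telescoping. The one step you flag as delicate --- passing from ``no cycles in the paper's sense'' (simple cycles with arbitrarily oriented arcs) to ``underlying undirected multigraph is a forest'' --- is indeed valid, since the paper's cycle notion is precisely an undirected cycle, and leaf-peeling needs nothing about arc orientations because a degree-one node kills the single incident residual regardless of its sign in the divergence. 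Note that you never use $2$-connectedness; that part of the suspension definition is needed for the unimodularity claim of Lemma 4 and for $P(v_*)$ to be a genuine node-arc incidence system, not for the feasibility equivalence itself.
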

 
\textbf{Transformation for  $\mathcal{I}$.}     We invoke this procedure to our implementation problem. First note that $(T_1\cup T_2, T)$ with $T_1=\{t_1^a,t_1^b\}$ is 2-connected, and that after deleting $t_1^a$ and the arcs incident to it,  the graph does not contain any cycle. Hence  $(T_1\cup T_2, T)$ is a suspension. By Lemma 5, we set $v_*=t_1^a$ and obtain the transformed problem $P(t_1^a)$ of $P_{\mathcal{I}}$, which is a classic network flow problem with capacitated nodes and uncapacitated arcs. To complete the proof, we  invoke a version of Hall's theorem to obtain a characterization for the implementability.  The details of proof are left to Appendix A.

\section{Structures of Reduced Forms}

In this section, we investigate the structures of implementable reduced forms  in a bargaining problem and obtain several useful characterizations. We also compare the implementability condition to  Border's condition and illustrate how the  structures of reduced forms in these problems differ.  We will focus on the following two sets of vectors $Q=(Q^k)_{k\in K_*}$, where we denote

1. $\mathcal{Q}$:  the set defined by the conditions \eqref{eq:4}, \eqref{eq:5}, and \eqref{eq:cut2}.

2. $\mathcal{Q}^*$:  the set defined by the conditions \eqref{eq:5} and \eqref{eq:cut2}.

\subsection{Lattice Polyhedron}

\cite{VO11} shows that the set of feasible reduced form auctions is a polymatroid. This implies that  feasible reduced forms that optimize over a given linear objective can be found by the greedy algorithm. \cite{CKM13} and \cite{AFH19} show that this property generalizes to the auction problems with group capacity constraints and matroid constraints.  \cite{AFH19} develop a polymatroidal decomposition approach to show that feasible reduced forms are a polymatroid associated with an expected rank function. They also provide computationally tractable (i.e., in polynomial time in the total number of agents’ types) methods for optimization and implementation of interim allocation rules. In this subsection, we investigate the polyhedral aspect of the implementability condition in the bargaining problem and show that the implementability condition has a richer structure than a polymatroid. 

Let  $\mathcal{A}=2^{T_1}\times 2^{T_2}\times 2^{K_*}$. Define a partially ordered set $(\mathcal{A},\preceq)$ by set inclusion and set intersection for each coordinate, i.e., for any $A^1,A^2\in \mathcal{A}$ with $ A^l=(E_{1}^l,E_{2}^l,G^l)$, $A^2 \preceq A^1$ if and only if $E_{1}^2 \subseteq E_{1}^1$, $E_{2}^2\subseteq E_{2}^1$,  and $G^2 \subseteq G^1$. Then $(\mathcal{A},\preceq)$ defines a lattice $(\mathcal{A},\preceq,\wedge, \vee)$ with lattice operations $\wedge, \vee$: for any $A^1,A^2\in \mathcal{A}$,
 \begin{align}
A^1 \vee A^2=  (E_1^1\cup E_1^2) \times (E_2^1\cup E_2^2) \times  (G^1\cup G^2), \notag\\
A^1 \wedge A^2=  (E_1^1\cap E_1^2)\times (E_2^1\cap E_2^2)\times (G^1\cap G^2).\notag
 \end{align}
 Notice that for each $X=T_1,T_2,K_*$, $(2^X, \subseteq, \cap,\cup)$ is a distributive lattice. Since $\mathcal{A}$ is the direct product of distributive lattices, it is also a distributive lattice.

\begin{lm}
$(\mathcal{A}, \preceq, \wedge, \vee)$  is a distributive lattice.
\end{lm}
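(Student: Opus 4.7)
The plan is to verify that $(\mathcal{A}, \preceq, \wedge, \vee)$ inherits its lattice structure coordinatewise from the three power-set lattices $(2^{T_1}, \subseteq, \cap, \cup)$, $(2^{T_2}, \subseteq, \cap, \cup)$, and $(2^{K_*}, \subseteq, \cap, \cup)$, each of which is well-known to be a distributive lattice. Since the paper has already noted these three facts, the real content is to check that this structure lifts through the direct product.

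First I would confirm that $\preceq$ is a partial order on $\mathcal{A}$: reflexivity, antisymmetry, and transitivity all follow coordinatewise from the corresponding properties of set inclusion. Next I would verify that the operations given in the statement are indeed the meet and join with respect to $\preceq$. For any $A^1 = (E_1^1, E_2^1, G^1)$ and $A^2 = (E_1^2, E_2^2, G^2)$, the set $E_1^1 \cap E_1^2$ is the largest subset of $T_1$ contained in both $E_1^1$ and $E_1^2$, and $E_1^1 \cup E_1^2$ is the smallest subset containing both, and the same holds for the $T_2$ and $K_*$ coordinates. Hence $A^1 \wedge A^2$ is the greatest lower bound and $A^1 \vee A^2$ is the least upper bound of $\{A^1, A^2\}$ in $(\mathcal{A}, \preceq)$, making $(\mathcal{A}, \preceq, \wedge, \vee)$ a lattice.

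The remaining step is distributivity. For any $A^1, A^2, A^3 \in \mathcal{A}$ one needs the identity $A^1 \wedge (A^2 \vee A^3) = (A^1 \wedge A^2) \vee (A^1 \wedge A^3)$ and its dual with $\wedge$ and $\vee$ interchanged. Since $\wedge$ and $\vee$ act componentwise, these identities reduce, in each of the three coordinates, to the familiar distributive laws for set intersection over set union, e.g., $E_1^1 \cap (E_1^2 \cup E_1^3) = (E_1^1 \cap E_1^2) \cup (E_1^1 \cap E_1^3)$, which hold in any power set.

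I do not anticipate any real obstacle here: the lemma is a direct consequence of the well-known fact that a finite direct product of distributive lattices is again a distributive lattice. The purpose of stating it is bookkeeping, so that the lattice $(\mathcal{A}, \preceq, \wedge, \vee)$ can serve as the index set over which the Gr\"oflin--Hoffman theory of lattice polyhedra will be invoked later in Section 6.
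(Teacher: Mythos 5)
Your proof is correct and follows essentially the same route as the paper, which simply observes that each coordinate lattice $(2^X,\subseteq,\cap,\cup)$ for $X=T_1,T_2,K_*$ is distributive and that a direct product of distributive lattices is distributive. Your version just spells out the coordinatewise verification of the partial order, the meet/join identification, and the distributive laws in more detail.
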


Consider the implementability condition \eqref{eq:cut2} where the row index set is given by $\mathcal{A}$ and the column index set is given by $C=  (T_1\times K_*)\cup (T_2\times K_*)$. By the change of variables $h_i^k(t_i)=Q_i^k(t_i)\lambda_i(t_i)$, the implementability condition reduces to a linear system with a $\{-1,0,1\}$ constraint matrix. To see this, we define  $h_j: \mathcal{A}\to \{-1,0,1\}$, $j\in C$,  and  $\beta: \mathcal{A} \to \R$ as follows. For each row $A=E_1\times E_2\times G$,  let $\beta(A)=\lambda(E_1\times E_2^c)$, and let
\begin{center}
 $ h_j(A) = \left\{ \begin{array}{ll}
+1  &  \text{ if}\phantom{0}   j\in E_1\times G, \\
-1  &  \text{ if}\phantom{0}  j \in E_2\times G,\\
0   &  \text{ if}\phantom{0}  j\notin (E_1\times G)\cup (E_2\times G).
\end{array} \right.$ \notag  \\
\end{center}

 We can write the set $\mathcal{Q}^*$ as
\begin{equation}
\mathcal{Q}^{*}= \{a\in \R_+^{|C|}:  h(A)\cdot a\leq  \beta (A), \,\,\text{for all}\,\, A\in \mathcal{A}\}.  \notag
\end{equation}

The next characterization of $\mathcal{Q}^*$ depends on the notion of lattice polyhedra introduced by \cite{GH82}. Lattice polyhedra is a general framework for various combinatorial structures, such as polymatroids and the intersection of polymatroids. They are specified by a lattice structure on the underlying matrix satisfying certain submodularity constraints, where the matrices used in the description of these polyhedra are $\{-1, 0, 1\}$.   In Appendix B, we formally introduce the notion of (distributive) lattice polyhedra  and we obtain the following characterization.

\begin{prop}(i) The set $\mathcal{Q}^*$ is a distributive lattice polyhedron.  (ii) The set $\mathcal{Q}$  is an intersection of a pointed cone and a distributive lattice polyhedron. \end{prop}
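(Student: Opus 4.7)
The plan is to verify the defining conditions of a distributive lattice polyhedron in the Gr\"oflin--Hoffman framework recorded in Appendix~B. For part~(i), three ingredients are required: the underlying poset is a distributive lattice, the right-hand-side function $\beta$ is submodular on that lattice, and the coefficient map $h$ has the appropriate sign-consistency and modularity structure. The first ingredient is exactly Lemma~7, so I would focus on the other two.

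For submodularity of $\beta$, I would use the independence of $\lambda$ to write $\beta(A)=\lambda_1(E_1)\,\lambda_2(E_2^c)$, which in particular does not depend on the $G$-coordinate. Partitioning $E_1^1,E_1^2$ into the three disjoint atoms $E_1^1\cap E_1^2$, $E_1^1\setminus E_1^2$, $E_1^2\setminus E_1^1$ and doing the analogous partition for the complements of $E_2^1,E_2^2$, a short bookkeeping calculation reduces the defect $\beta(A^1)+\beta(A^2)-\beta(A^1\vee A^2)-\beta(A^1\wedge A^2)$ to $\lambda_1(E_1^1\setminus E_1^2)\,\lambda_2(E_2^2\setminus E_2^1)+\lambda_1(E_1^2\setminus E_1^1)\,\lambda_2(E_2^1\setminus E_2^2)$, which is manifestly nonnegative, giving submodularity.

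For the structure of $h$, I would exploit its product form. Writing $h_j(A)=\varepsilon_j\cdot [t_i\in E_i]\cdot [k\in G]$ with $\varepsilon_j=+1$ for $j=(t_1,k)\in T_1\times K_*$ and $\varepsilon_j=-1$ for $j=(t_2,k)\in T_2\times K_*$, the sign is a column invariant, which supplies the sign-consistency requirement. For the required modularity condition, note that on $\{0,1\}$-valued variables the product $uv$ is supermodular, i.e.\ $\max(u,u')\max(v,v')+\min(u,u')\min(v,v')\geq uv+u'v'$, which one verifies by exhausting the few possible cases. Multiplying by $\varepsilon_j$ then yields the supermodular inequality on $T_1$-columns and the submodular inequality on $T_2$-columns, matching the orientation the framework expects.

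For part~(ii), I would write $\mathcal{Q}=\mathcal{Q}^*\cap C_0$, where $C_0$ is the intersection of the nonnegative orthant with the linear subspace defined by~\eqref{eq:4}. Then $C_0$ is a polyhedral cone, and it is pointed because it sits inside $\R_+^{|C|}$, which contains no line. Together with part~(i), this yields the claimed decomposition. The main obstacle will be matching the verified properties of $h$ and $\beta$ to the precise formulation of Gr\"oflin--Hoffman's theorem adopted in Appendix~B, since the modularity inequality runs in opposite directions on $T_1$- and $T_2$-columns and needs to be packaged carefully; beyond that, both ingredients reduce to elementary case analyses.
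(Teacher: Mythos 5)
Your overall plan coincides with the paper's proof of Proposition 1: verify the Gr\"oflin--Hoffman conditions of Definition 3 for the lattice $(\mathcal{A},\preceq,\wedge,\vee)$ (whose distributivity is Lemma 6, not Lemma 7), establish submodularity of $\beta$, and check (C1)--(C3) for the columns $h_j$; part (ii) then follows by intersecting with the cone cut out by \eqref{eq:4} and nonnegativity, exactly as in the paper. Your submodularity argument for $\beta$ is correct but takes a different route: the paper realizes $\beta(A)=\lambda(E_1\times E_2^c)$ as the cut function $g(E_1\cup E_2)=c(\delta^{out}(E_1\cup E_2))$ of the complete bipartite graph directed from $T_1$ to $T_2$, inheriting submodularity for arbitrary $\lambda$, whereas your direct computation of the defect as $\lambda_1(E_1^1\setminus E_1^2)\lambda_2(E_2^2\setminus E_2^1)+\lambda_1(E_1^2\setminus E_1^1)\lambda_2(E_2^1\setminus E_2^2)$ relies on the factorization $\beta(A)=\lambda_1(E_1)\lambda_2(E_2^c)$ and hence on the independence of $\lambda$. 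Both are valid here; the cut-function argument is the more robust of the two. Your treatment of (C1)--(C2) via the column-constant sign and monotonicity along chains is fine.

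The genuine gap is the one you flag and then defer: condition (C3) with equality. Definition 3 requires every $h_j$ to be \emph{modular} on $\mathcal{A}$, and what you actually prove --- supermodularity of the indicator product on $T_1$-columns and submodularity on $T_2$-columns --- is strictly weaker and cannot be ``packaged'' into modularity, because modularity fails. For $j=(t_1,k)$ take $A^1=\{t_1\}\times\emptyset\times\emptyset$ and $A^2=\emptyset\times\emptyset\times\{k\}$: then $h_j(A^1)=h_j(A^2)=h_j(A^1\wedge A^2)=0$ while $h_j(A^1\vee A^2)=1$, so the product of indicators is strictly supermodular at this incomparable pair; the mirror example on a $T_2$-column gives $0\leq -1$, i.e.\ even the inequality (C3) as stated is violated. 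Note that the paper's own proof asserts at this point that ``(C3) holds with equality'' and that the verification is ``similar'' to (C1), without carrying it out; your computation therefore does not merely leave a hole in your own argument, it isolates the one step of the published proof that is not actually verified and that, taken literally over the full row lattice $\mathcal{A}$, is false. Closing part (i) requires either restricting or reordering the row lattice so that the offending incomparable pairs do not occur, or invoking a variant of the lattice-polyhedron framework that permits opposite orientations on the two column classes; neither is supplied in your proposal (nor in Appendix B).
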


It is known that distributive lattice polyhedra can be reduced to submodular flow polyhedra \cite{SC04}, for which efficient algorithms have been developed. This means that a mechanism design problem in our environment can be solved as tractably in reduced form when the implementability condition is considered.

\subsection{A Player-symmetric Description}

Border's condition is  player-symmetric, in the sense that the condition is invariant under permutations of players' labels. On the contrary, the implementability condition \eqref{eq:cut2} is not player-symmetric, i.e., for each inequality the coefficients are  $\{0,1\} $ for player 1 and $\{-1,0\}$  for player 2. This is because the directed network constructed  is not symmetric with respect to $T_1$ and $T_2$.  We show that condition \eqref{eq:4} and \eqref{eq:cut2} can be packed into a more compact description, which is player-symmetric.

Condition \eqref{eq:4} shows that $Q$ is in the solution set of a homogeneous system of equations. Multiplying both sides of condition \eqref{eq:cut2} by $2$ and subtracting  \eqref{eq:4} for all $k\in G$, we get the following player-symmetric condition: for all $G\subseteq K_*$,  $ E_1\subseteq T_1 $, and $E_2\subseteq T_2$,
\begin{equation}\label{eq:cut5}
\frac{1}{2}\sum_{i\in \{1,2\}}\sum_{k\in G}[\sum_{t_i\in E_i} Q^k_i(t_i)\lambda_i(t_i)-\sum_{t_i\in   E_i^c} Q^k_i(t_i)\lambda_i(t_i)]   \leq  \lambda(E_1\times E_2).
 \end{equation}

It can be shown that condition \eqref{eq:cut5} implies both \eqref{eq:4} and \eqref{eq:cut2}. Note that condition \eqref{eq:cut5} requires half-integral coefficients $\{-\frac{1}{2},0,\frac{1}{2}\}$ in the constraint matrix of reduced forms, in contrast to integral coefficients $\{0, 1\}$ in Border's condition.

\section{Concluding Remarks}

In this paper, we have studied the implementation of reduced form allocations in 2-person bargaining problems and  characterized the implementability condition. We derive a set of necessary conditions for the implementability and show that it is a complete description for the implementability. We also find that the implementability condition forms a lattice polyhedron and the feasible reduced forms has some new features compared to Border's condition. A by-product of our analysis concerns the transformation technique for multiflows on suspension graphs, which suggests other applications in mechanism design problems, including the implementation of expected allocations in random assignment problems. 

On the other hand, the characterization result in this paper is limited to some special type sets and the  combinatorial structure of the implementability condition becomes more complicated compared to the one-dimensional problems. In particular, the implementation condition does not form a polymatroid, and hence reduced form allocations may not be written as convex combinations of simple (hierarchical) allocations. Moreover, the characterization by the implementability condition requires a large amount of constraints.  A reduction of constraints similar to \cite{BO91} and \cite{CKM13} remains a challenging problem.  Another important issue would be finding polynomial time algorithms (in the number of types and alternatives) for optimization and implementation of interim allocation rules as  \cite{AFH19}. We hope that the implementation result in this paper will be useful to study these problems.

\begin{center}
{\large \textbf{Appendix A}}
\end{center}
In this appendix, we provide a detailed proof of Theorem 1.  Let $ \mathcal{I}$ be an implementation problem. For any $q$ and $Q$, we denote $x^k(t):= q^k(t)\lambda(t)$ and $h_i^k(t_i):= Q_i^k(t_i)\lambda_i(t_i)$.  As we will show below, the  change of variables reveals a clear combinatorial structure of the implementation system. We define a linear map $N: \R^T\to \R^{T_1+T_2}$ by $(N v)_i(t_i):= \sum_{t_{-i}\in T_{-i}} v(t) $ for any $v\in \R^T$, which corresponds to the node-edge incidence matrix of a complete bipartite graph on $T_1\cup T_2$.  Since $\lambda>0$, we reformulate the implementation system $\mathcal{I}$   by the following multiflow system $P_{\mathcal{I}}$, with $x$ being the multiflow variable:
 \begin{alignat}{3}
&   N x^k= h^k,     \,\,\text{for all}\,\,k\in K_*, \label{eq:B2}  \notag \\
& {\sum_{k\in K_*}} x^k+x^{k_0}= \lambda, \notag\\
& x^k\geq 0, x^{k_0}\geq 0. \notag
\end{alignat}

\begin{lm} (1)$P_{\mathcal{I}}$ has a feasible solution if and only if the transformed problem  $P(t_1^a)$ defined by
\begin{align}
  {\sum_{t_{-i}: t\in T}}  x^k(t)  &=h_i^k(t_i),    \,\, \text{for all}\,\,  (k,t_i)\in K\times (T_2\cup T_1\setminus\{t_1^a\}),  \notag\\
   {\sum_{k\in K}}  x^k(t) &=\lambda(t),  \,\, \text{for all}\,\,  t=(t_1^a,t_2)\in T,   \notag\\
x&\geq 0, \notag
\end{align}
 has a feasible solution.  (2) $P(t_1^a)$ is a network flow problem $(G(t_1^a), c, d)$ where 

i.  The transformed graph $G(t_1^a)=(V_1\cup V_2,E)$  is a bipartite graph:  

 The supply nodes $V_1$  consist of all $(k,t_2)\in K\times T_2$, and the  demand nodes   $V_2$  consist of  all $(k,t_1)\in  K\times  (T_1\setminus \{t_1^a\})$ and all $(t_1^a,t_2)\in T$.  

 The arc set $E$ consists of  arcs from each $(k,t_2)$ to   $(k',t_1)$ if $k'=k$, and arcs from each $(k,t_2)$ to $(t_1^a,t_2')$ if $t_2'=t_2$.

ii. The supply $c: V_1\to \R_+$ and the demand $d: V_2\to \R_+$ at each node is defined by the original system, i.e., each row in $P(t_1^a)$ is a row in $P_{\mathcal{I}}$.
\end{lm}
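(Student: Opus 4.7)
The plan is to invoke Lemma~5 with pivot node $v_*=t_1^a$. First I would verify the suspension hypothesis for $G=(T_1\cup T_2,T)$. Assumption~1 forces (up to relabeling of players) $|T_1|=2$, so $G$ is the complete bipartite graph $K_{2,|T_2|}$, which is 2-connected; after deleting $t_1^a$ together with all arcs incident to it, what remains is a star centered at $t_1^b$, which is acyclic. Hence $G$ is a suspension with respect to $t_1^a$ and Lemma~5 applies.

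Lemma~5 then yields flow-conservation equations $f^k(\delta^{in}(v))-f^k(\delta^{out}(v))=d^k(v)$ for each $k\in K_*$ and each $v\neq t_1^a$, an analogous equation for the slack $s=x^{k_0}$ whose right-hand side is $d^s(v)=c(\delta^{in}(v))-c(\delta^{out}(v))-\sum_{k\in K_*}d^k(v)$, and capacity equations only on the arcs $a\in A(t_1^a)$. The key step for part (1) is to show that the slack equation is itself a flow-conservation equation for a ``commodity $k_0$'' with demand $h_i^{k_0}(t_i):=Q_i^{k_0}(t_i)\lambda_i(t_i)$. Evaluating $d^s$ at $v=t_1^b$ gives $-\lambda_1(t_1^b)+\sum_{k\in K_*}h_1^k(t_1^b)$, and at $v=t_2\in T_2$ gives $\lambda_2(t_2)-\sum_{k\in K_*}h_2^k(t_2)$; condition \eqref{eq:3} collapses these to $-h_1^{k_0}(t_1^b)$ and $+h_2^{k_0}(t_2)$ respectively. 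Consequently the $|K_*|+1$ families of conservation equations pack into the single family indexed by $k\in K$ appearing in the statement, and the capacity equation $\sum_{k\in K_*}f^k(a)+s(a)=c(a)$ on an arc $a=(t_1^a,t_2)$ becomes $\sum_{k\in K}x^k(t_1^a,t_2)=\lambda(t_1^a,t_2)$. This establishes part~(1).

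For part (2), the program $P(t_1^a)$ is in standard form, so I would identify its constraint matrix as a node-arc incidence matrix by a column-by-column inspection. Each variable $x^k(t_1,t_2)$ appears in exactly two rows: the conservation row for $(k,t_2)\in K\times T_2$, and either the conservation row for $(k,t_1^b)$ when $t_1=t_1^b$, or the capacity row for $(t_1^a,t_2)$ when $t_1=t_1^a$. After multiplying every row indexed by $K\times T_2$ by $-1$, each column carries exactly one $+1$ and one $-1$, exhibiting the constraint matrix as the incidence matrix of the directed bipartite graph in the statement: supply side $V_1=K\times T_2$, demand side $V_2=(K\times(T_1\setminus\{t_1^a\}))\cup(\{t_1^a\}\times T_2)$, with arcs $(k,t_2)\to(k,t_1^b)$ and $(k,t_2)\to(t_1^a,t_2)$, and with $c$ and $d$ read off from the (possibly negated) right-hand sides $h_i^k(t_i)$ and $\lambda(t_1^a,t_2)$. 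The only real obstacle in the argument is the sign bookkeeping needed to merge the slack $s$ with the $K_*$-commodities into a genuine $K$-indexed flow; the identity \eqref{eq:3} is exactly what makes this merger work.
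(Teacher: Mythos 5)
Your proposal is correct and follows the paper's own route: verify that $(T_1\cup T_2,T)$ is a suspension with respect to $t_1^a$, invoke Lemma~5, and then identify the constraint matrix of $P(t_1^a)$ as a node-arc incidence matrix by checking that each column meets exactly two rows (with opposite signs after negating the $K\times T_2$ rows). The only difference is that you make explicit the computation of $d^s$ at $t_1^b$ and at each $t_2$, showing via the accounting identity $\sum_{k\in K}h_i^k(t_i)=\lambda_i(t_i)$ that the slack equation is precisely the $k_0$-conservation equation; the paper leaves this step implicit in the phrase ``leads to the transformed problem $P(t_1^a)$,'' so your version is a faithful, slightly more detailed rendering of the same argument.
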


\begin{proof}[Proof of Lemma 7]  (1)   Notice that $(T_1\cup T_2, T)$ with $T_1=\{t_1^a,t_1^b\}$ is a suspension. By Lemma 5, setting $v_*=t_1^a$ in $P_{\mathcal{I}}$ leads to the transformed problem $P(t_1^a)$.   (2) By inspection,  each variable in $P(t_1^a)$ appears in exactly two constraints with opposite signs. Hence, the constraint matrix of   $P(t_1^a)$ is the node-arc incidence matrix of $G(t_1^a)$.  With supply vector $c$ and demand vector $d$  defined by the original system, $P(t_1^a)$  corresponds to a node-capacitated network flow problem $(G(t_1^a), c, d)$: $P(t_1^a)$ has a feasible solution $ x$ if and only if there exists a non-negative flow $f: E\to \R_+$ satisfying that the total flow leaving each $v\in V_1$ meets $c$ exactly and the total flow into each $v\in V_2$ meets $d$ exactly.  \end{proof}

To obtain the characterization in Theorem 1, we apply the following version of Hall's theorem. It provides a necessary and sufficient condition for the existence of a perfect matching in a bipartite graph with a non-unity capacity at each node.

 \begin{lm}\citep[p.15]{FU05} Let $G=(V_1\cup V_2, E)$ be a bipartite graph and let supply and demand vectors $c\in\R^{V_1}$ and  $d\in \R^{V_2}$ be given. There exists a non-negative flow $\varphi: E\to \R_+ $ satisfying that the total flow leaving each $v\in V_1$ meets  exactly the supply and the total flow into each $v\in V_2$ meets  exactly the demand if and only if
 \begin{equation}
\label{FU05}   d(U)\leq  c(\Gamma(U)),\,\,\,\text{for all} \,\,\, U\subseteq V_2,
\end{equation}
where $\Gamma(U)$ denotes the set of all neighbors of $U$ in $G$.
 \end{lm}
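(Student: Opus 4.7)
The plan is to reduce Lemma 8 to the max-flow min-cut theorem via a standard super-source/super-sink construction. Necessity of \eqref{FU05} is a direct counting argument: for any $U \subseteq V_2$, every unit of flow arriving at $U$ must originate at a vertex of $V_1$ adjacent to $U$, so at most $c(\Gamma(U))$ units are available, and the requirement that exactly $d(U)$ units arrive forces $d(U) \leq c(\Gamma(U))$. Taking $U = V_2$ yields in particular $d(V_2) \leq c(V_1)$; since any feasible flow must saturate both sides simultaneously, the balance $c(V_1) = d(V_2)$ is an implicit side condition in the biconditional, and I would assume it throughout the sufficiency argument.

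For sufficiency, construct an auxiliary digraph $G^+$ by adjoining a super-source $s$ with arcs $s \to v$ of capacity $c(v)$ for each $v \in V_1$, a super-sink $t$ with arcs $v \to t$ of capacity $d(v)$ for each $v \in V_2$, and orienting every edge of $E$ from $V_1$ to $V_2$ with capacity $+\infty$. A non-negative flow $\varphi$ satisfying the supply and demand constraints of Lemma 8 then corresponds bijectively to a maximum $s$-$t$ flow in $G^+$ of value $c(V_1) = d(V_2)$, since such a maximum flow is forced to saturate every source arc and every sink arc.

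By max-flow min-cut, the existence of such a flow reduces to showing that every $s$-$t$ cut in $G^+$ has capacity at least $d(V_2)$. Fix a cut $(S, \bar{S})$ with $s \in S$, $t \in \bar{S}$. Finiteness rules out any infinite-capacity edge crossing the cut, so no edge of $E$ goes from $S \cap V_1$ to $\bar{S} \cap V_2$. Setting $U := \bar{S} \cap V_2$, this forces $\Gamma(U) \subseteq \bar{S} \cap V_1$, and the cut capacity satisfies
\begin{equation*}
c(\bar{S} \cap V_1) + d(S \cap V_2) \geq c(\Gamma(U)) + d(V_2 \setminus U) \geq d(U) + d(V_2 \setminus U) = d(V_2),
\end{equation*}
the second inequality being precisely \eqref{FU05}. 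The main subtlety, which I would flag as the hard part, is the careful handling of the implicit balance condition $c(V_1) = d(V_2)$ together with the infinite capacities in $G^+$; once these are dispatched, the remainder of the argument is routine LP-duality bookkeeping.
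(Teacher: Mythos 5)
Your proof is correct. Note, however, that the paper itself offers no proof of this lemma: it is imported verbatim from \citet[p.~15]{FU05} as a known feasibility theorem, so there is no internal argument to compare against. What you supply is the standard self-contained derivation via the super-source/super-sink construction and max-flow min-cut, and all the steps check out: the necessity count, the identification $U=\bar S\cap V_2$ with $\Gamma(U)\subseteq \bar S\cap V_1$ forced by the infinite interior capacities, and the chain $c(\bar S\cap V_1)+d(S\cap V_2)\geq c(\Gamma(U))+d(V_2\setminus U)\geq d(V_2)$. You are also right to flag the balance condition: as literally stated the ``if'' direction fails without $c(V_1)=d(V_2)$ (e.g.\ a single edge with $c=2$, $d=1$ satisfies \eqref{FU05} but no flow saturates the supply exactly), so it must be read as an implicit hypothesis; in the paper's application it holds automatically, since total supply $\sum_{k,t_2}h_2^k(t_2)=\sum_{t_2}\lambda_2(t_2)=1$ and total demand $\sum_{k}\sum_{t_1\neq t_1^a}h_1^k(t_1)+\sum_{t_2}\lambda(t_1^a,t_2)=\sum_{t_1}\lambda_1(t_1)=1$ by condition \eqref{eq:3}. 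Two cosmetic points: the monotonicity step $c(\bar S\cap V_1)\geq c(\Gamma(U))$ uses $c\geq 0$, which should be stated (it holds in the application), and ``finiteness rules out any infinite-capacity edge crossing the cut'' is better phrased as: cuts crossed by an infinite-capacity arc have infinite capacity and satisfy the bound trivially, so only the remaining cuts need checking.
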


\begin{proof}[Proof of Theorem 1] By Lemma 7, $P(t_1^a)$ has a feasible solution if and only if for the network flow problem $(G(t_1^a),c,d)$, there exists a non-negative flow that meets the supply and demand at each node. By Lemma 8,  $P(t_1^a)$ has a feasible flow if and only if for all $U\subseteq V_2$,
 \begin{equation} \label{Z5}
 \sum_{(k,t_1)\in U } h_1^k(t_1)+\sum_{(t_1^a,t_2)\in U} \lambda(t_1^a,t_2) \leq \sum_{(k,t_2)\in \Gamma(U) } h_2^k(t_2).
  \end{equation}

Fix any $U\subseteq V_2$. We have  $U=(U_1\times \{t_1^b\})\cup (\{t_1^a\}\times U_2)$ for some $U_1\subseteq K$ and $U_2\subseteq T_2$. It follows that $\Gamma(U_1\times \{t_1^b\})\setminus (K\times U_2)= U_1\times (T_2\setminus U_2)$.  We can write condition \eqref{Z5} as
 \begin{equation} \label{Z6}
 \sum_{k\in U_1 } h_1^k(t_1^b)+\sum_{t_2\in U_2} \lambda(t_1^a,t_2) \leq \sum_{(k,t_2)\in K\times U_2 } h_2^k(t_2)+\sum_{(k,t_2)\in U_1\times (T_2\setminus U_2) } h_2^k(t_2).
 \end{equation}
Note that
 \begin{equation}
\sum_{(k,t_2)\in K\times U_2 } h_2^k(t_2)=\sum_{t\in T: t_2\in U_2} \lambda (t),\notag \end{equation}
which implies
\begin{align}\label{Z7}
\sum_{(k,t_2)\in K\times U_2 } h_2^k(t_2)-\sum_{t_2\in U_2} \lambda(t_1^a,t_2)=\lambda(\{t_1^b\}\times U_2).
\end{align}
By conditions \eqref{Z6} and  \eqref{Z7},  \eqref{Z5} can be written as
\begin{equation} \label{eq:T1a}
 \sum_{k\in U_1 } h_1^k(t_1^b)-\sum_{(k,t_2)\in U_1\times (T_2\setminus U_2)} h_2^k(t_2) \leq \lambda(\{t_1^b\}\times U_2).
 \end{equation}
There are two cases, depending on whether  $k_0\in U_1$ or not.

Suppose first that $k_0\notin U_1$. Then  \eqref{eq:T1a} reduces to the implementability inequality \eqref{eq:cut2},  with  $G=U_1$, $E_1=\{t_1^b\}$,  and $E_2=T_2\setminus U_2$.

Suppose next that $k_0\in U_1$. On the left hand side of \eqref{eq:T1a}  we take the summations for $k\in U_1\setminus\{k_0\}$ and $k=k_0$. Note that
\begin{equation}
 h_1^{k_0}(t_1^b)-\sum_{t_2\notin U_2} h_2^{k_0}(t_2)=\lambda_1(t_1^b)-\sum_{k\in K_*}h_1^k(t_1^b)-\sum_{t_2\notin U_2} \lambda_2(t_2)+\sum_{k\in K_*}\sum_{t_2\notin U_2} h_2^k(t_2). \notag
\end{equation}
The left hand side of \eqref{eq:T1a} is given by
\begin{equation}
- \sum_{k\in K_*\setminus  U_1}[h_1^k(t_1^b)-\sum_{t_2\notin U_2} h_2^k(t_2)]+\lambda_1(t_1^b)-\sum_{t_2\notin U_2} \lambda_2(t_2).\notag
 \end{equation}
Hence condition \eqref{eq:T1a} reduces to
\begin{equation}\label{eq:T1b}
- \sum_{k\in K_*\setminus  U_1}[h_1^k(t_1^b)-\sum_{t_2\notin U_2} h_2^k(t_2)]\leq \sum_{t_2\notin U_2}\lambda(t_1^a, t_2).
 \end{equation}
Since $h_1^k(t_1^b)=\sum_{t_2\in T_2}h_2^k(t_2)-h_1^k(t_1^a)$,   it follows that \eqref{eq:T1b} reduces to the implementability inequality \eqref{eq:cut2}, with  $G=K_*\setminus  U_1$, $E_1=\{t_1^a\}$,  and $E_2= U_2$.  \end{proof}

\begin{center}
{\large \textbf{Appendix B}}
\end{center}

Before introducing the lattice polyhedron,  we introduce some notions.  Let  $(L, \preceq, \wedge, \vee)$ be a finite lattice with partial order $\preceq$, meet $\wedge$,  and join $\vee$. We say $f: L\to \R$ is {\it submodular} if $f(a)+f(b)\geq  f(a \vee  b)+   f(a \wedge b)$ for all  $a, b\in L$,  {\it supermodular} if the inequality is reversed, and {\it modular} if equality holds.

\begin{df}\citep{GH82} Let  $(L, \preceq, \wedge, \vee)$ be a finite lattice. Let $r: L\to \R$, $f_j : L\to  \{0, \pm1\}^n$  for $j=1,...,n$, and let $c,d\in \{\R^n\cup\pm \infty\}$. The polyhedron
\begin{equation}
P=  \{x\in \R^n:   c\leq x\leq d,     f(a)\cdot x \leq r(a), \,\,\text{for all}\,\, a\in L\} \notag
\end{equation}
is called a {\it lattice polyhedron}, if  $r$  is submodular, and  for each $j=1,...,n$, the following three conditions hold: for any $a,b,c\in L$,

 (C1)  $|f_j(a)-f_j(b)|\leq 1$ if $a\prec b$,

 (C2)    $|f_j(a)-f_j(b)+f_j(c)|\leq 1$  if $a\prec b\prec c$,  and

 (C3) $f_j(a)+f_j(b)\leq  f_j(a \vee  b)+f_j(a \wedge b)    $.

 Moreover, a lattice polyhedron  $P$ is a {\it distributive lattice polyhedron}, if the lattice $(L, \preceq, \wedge, \vee)$ is distributive and (C3) is satisfied with equality, i.e., for all $j=1,...,n$, $f_j$ is modular.
 \end{df}

\begin{proof}[Proof of Proposition 1] By Lemma 6,   $(\mathcal{A}, \preceq, \wedge, \vee)$  is a  distributive lattice. We show that $ \mathcal{Q}^{*}$ defines a distributive lattice polyhedron with respect to $(\mathcal{A}, \preceq, \wedge, \vee)$.

We first show that   $\beta$ is submodular. To this end, let $(V, E)$ be a complete bipartite graph where $V=T_1\cup T_2$ consists of all nodes and  $E$ consists of all arcs from each $t_1\in T_1$ to each $t_2\in T_2$. Define a capacity function $c:E\to \R_+$ by $c(e)=\lambda(e)$. For any $U\subseteq V$, let $\delta^{out} (U)$ denote the arcs leaving $U$. Then, the cut function $g: 2^V\to \R$ defined by  $g(U)=c(\delta^{out} (U)) $ is submodular, with set union and intersection as the lattice operations.  For each $A=E_1\times E_2\times G$,   let $\pi(A)=E_1\cup E_2$.  By construction,   $\beta(A)=g(\pi(A))$,   $\beta(A^1 \vee  A^2)=g(\pi(A^1)\cup \pi(A^2))$,  and $\beta(A^1 \wedge A^2)=g(\pi(A^1)\cap \pi(A^2))$.  Since $g$ is submodular, we have for any $A^1,A^2\in \mathcal{A}$,
\begin{align}
&\beta(A^1)+\beta(A^2)-\beta(A^1 \vee  A^2)-   \beta(A^1 \wedge A^2)  \notag\\
&=g(\pi(A^1)) +g(\pi(A^2))-g(\pi(A^1) \cup \pi(A^2))- g(\pi(A^1) \cap \pi(A^2))\geq0.  \notag
\end{align}

We then show that for each $j\in C$, $h_j$ satisfies  (C1)-(C3) and (C3) holds with equality. We verify (C1) and the other conditions can be shown similarly.  Let $A^1\prec A^2$. Pick any $j\in C$, then either  $j=(t_1,k)$ or $j=(t_2,k)$. We consider $j=(t_1,k)$ and a similar analysis applies to $j=(t_2,k)$.  There are three subcases: (a) $j\in E_1^1\times G^1$, (b)  $j\in (E_1^2\times G^2)\setminus (E_1^1\times G^1)$, and (c) $j\notin E_1^2\times G^2$. In each of the cases, $|h_j(A^1)-h_j(A^2)|\leq 1$.  \end{proof}

\end{document}